\title[Scenario generation in Risk management]{A dynamic approach for scenario generation in \\ Risk management}
\author{J.-P.~Ortega \and R.~Pullirsch \and J.~Teichmann \and J.~Wergieluk}
\address{CNRS, Universit\'e de Franche-Comt\'e, UFR des Sciences et Techniques, 16, route de Gray, 25030 Besan\c{c}on, France; RZB, Am Stadtpark 9, A-1030 Wien, Austria; Vienna University of Technology, Department of Mathematical Methods in Economics, Wiedner Hauptstrasse 8--10, A-1040 Wien, Austria}
\email{Juan-Pablo.Ortega@univ-fcomte.fr, rainer.pullirsch@rzb.at, \newline jteichma@math.ethz.ch, julian.wergieluk@rzb.at}
\thanks{The first and third author gratefully acknowledge support from the START-project Y 328 funded by the Austrian Science Foundation FWF}
\numberwithin{equation}{section} \swapnumbers
\newtheorem{satz}{Satz}[section]
\newtheorem{theorem}[satz]{Theorem}
\newtheorem{assumption}[satz]{Assumption}
\newtheorem{definition}[satz]{Definition}
\newtheorem{remark}[satz]{Remark}
\newtheorem{example}[satz]{Example}
\newcommand{\dom}{\operatorname{dom}}
\begin{document}
\begin{abstract}
We provide a new dynamic approach to scenario generation for the purposes of risk management in the banking industry. We connect ideas from conventional techniques -- like historical and Monte Carlo simulation -- and we come up with a hybrid method that shares the advantages of standard procedures but eliminates several of their drawbacks. Instead of considering the static problem of constructing one or ten day ahead distributions for vectors of risk factors, we embed the problem into a dynamic framework, where any time horizon can be consistently simulated. Additionally, we use standard models from mathematical finance for each risk factor, whence bridging the worlds of trading and risk management.

Our approach is based on stochastic differential equations (SDEs), like the HJM-equation or the Black-Scholes equation, governing the time evolution of risk factors, on an empirical calibration method to the market for the chosen SDEs, and on an Euler scheme (or high-order schemes) for the numerical evaluation of the respective SDEs. The empirical calibration procedure presented in this paper can be seen as the SDE-counterpart of the so called Filtered Historical Simulation method; the behavior of volatility stems in our case out of the assumptions on the underlying SDEs. Furthermore, we are able to easily incorporate ``middle-size'' and ``large-size'' events within our framework always making a precise distinction between the information obtained from the market and the one coming from the necessary a-priori intuition of the risk manager.

Results of one concrete implementation are provided. 
\bigskip

\textbf{Key Words:} risk management, stochastic (partial) differential equation, calibration, historical simulation, time series, jump processes 
\end{abstract}

\maketitle

\bibliographystyle{plain}

\section{Introduction}
A core part of risk management in the banking industry is the identification of risk factors and the generation of scenarios for a one up to ten days time horizon. This task involves dealing with basically three important issues and technical requirements. First, the scenarios need to extract the market information from the available time series of the risk factors. Mathematically speaking, this problem is tackled via the identification of random variables that describe the risk factors one or ten days ahead and that match the most important stylized facts in the time series. A major complication in this classical statistics (or econometrics) question is that the available time series are extremely short (about 500 business days) in comparison with the number of risk factors (several thousands). Even if the time series were long enough, standard statistical methods would take an excessively heavy computational effort to come up with reliable solutions for such a high dimensional problem. 

Second,  the shortness of the time series implies that one has to separately model extreme events, or even middle size events, in order to complement the information provided by the time series and to allow for cautious prediction. The third issue is that the procedure for scenarios generation should be computationally fast since they have to be generated every (business) day in order to be able to evaluate the risk to which all the portfolios of the institution are exposed.

There are two families of ways to deal with the problem of scenario generation in risk management, namely the historical and the distributional approaches. Both strategies produce random variables from which one can sample the scenarios:

\medskip

\noindent {\bf Distributional or Parametric Approach}. The scenarios are constructed by sampling a multivariate parametric distribution that has been calibrated using the historical data. One can also fit certain distributions to vectors of risk factors and then use a copula to put together all these lower dimensional models. There are two major difficulties associated to this method: first, most statistical estimators of the distribution parameters that need to be calibrated yield excessively wide confidence intervals for the estimated coefficients due to the relative shortness of the historical time series in  comparison with the dimensionality of the problem. Second, the available historical data may not correspond to any of the standard frequently used parametric distributions.

\medskip

\noindent {\bf Historical Approach}. The scenarios are obtained by directly sampling the historical data (eventually after rescaling). This method is much privileged in practice due to its ease of implementation and because it does not require any assumption on the historical distributions of the risk factors. This does not mean that this method is hypothesis free; indeed, its statistical legitimacy comes from the so called bootstrap method (see for instance~\cite{bootstrap}), whose requirements are not always satisfied, the most important of them being the iid (independent and identically distributed) character of the risk factors returns. Additionally, this method is obviously incapable of predicting rare (extreme) phenomena that did not happen before in the past; consequently these rare phenomena need to be added ``by hand'' (which is not a disadvantage but simply means that the time series is too short to reflect all possible risks). Several strategies have been proposed to tackle these problems; a particularly sophisticated and powerful one is the so called filtered historical simulation \cite{barbougia:98}, where the historical returns are ``filtered'' by volatility functions calibrated to each time series using GARCH type processes.

\medskip

\noindent {\bf Empirical calibration of SDEs}. The approach that we present in this article is reminiscent of both the historical and the distributional approaches and tries to ``interpolate'' between them. It allows us to properly capture the stylized facts on dependence of the given time series and to easily carry out additional distributional tuning. We create the random variable for future scenarios based on \textbf{stochastic differential equations}, which describe the local dynamics of the respective risk factors (``distributional aspect''). The stochastic differential equations are standard models from mathematical finance like the Black-Merton-Scholes, Cox-Ingersoll-Ross, Heath-Jarrow-Morton models, etc, which are adequately chosen for each individual class of risk factors; in particular all these equations are \textbf{free of arbitrage}. The use of risk management models which could also be used for pricing or hedging derivative contracts is a tremendous conceptual advantage. This way we bridge between the worlds of trading and hedging. In particular the approach allows to actually come up with conceptually coherent methods for risk management which could in principle also be used for trading purposes.

We calibrate the SDEs not via an optimization process as it is customary, but by directly using the historical data to construct local characteristics that capture the stylized facts of the time series. The method relies on the central limit theorem for covariance processes. Under certain conditions that we specify later on in the paper the calibrated SPDEs converge properly if the observed data stem from an SPDE (which is basically the law of large numbers behind the construction). Heavy tails are generated by distinguishing between ``trading time'' and ``clock time'' (randomization of the period of simulation) and by adding extreme events, possibly with heavy tailed distributions. The first effect reflects that the flow of information runs at different speeds under different market volatility conditions and is responsible for ``middle-size'' events; the second effect reflects the possibility of the occurrence of rare (extreme) events not present in the historical data.

Having the filtered historical simulation approach (FHS)~\cite{barbougia:98} in mind, the empirical calibration approach can be seen as the counterpart of FHS on the side of mathematical finance, i.e., instead of using the GARCH-methodology to capture volatility, we apply standard models from mathematical finance and random time changes for the same purpose (see Section \ref{calibration} and \ref{fine_tuning}). To be precise, the issue of capturing volatility correctly is addressed by rescaling with respect to realized volatility parameters of the time series via a sliding window method and by imposing specific assumptions on the dependence of volatility on the underlying risk factors.

\section{Risk factors driven by diffusive stochastic differential equations}\label{basic_setting}

In this section we outline the basic setting of Wiener driven stochastic (partial) differential equations, which underlie the time evolution of risk factors. For stochastic (partial) differential equations we refer to \cite{dapzab:92} as main reference. Details on term structure equations can be found in \cite{filipoteich}.

Our formal setup consists of a filtered probability space $ (\Omega, \mathcal{F}, {(\mathcal{F}_t)}_{t \geq 0}, P) $, where $ {(\mathcal{F}_t)}_{t \geq 0} $ is a complete filtration and $ P $ a probability measure. The basis is assumed to carry a $d$-dimensional standard Brownian motion $ B $. We use as models for the risk factors stochastic differential equations from mathematical finance; this leads us to  stochastic partial differential equations of the following general type: let $ H $ be the Hilbert space where risk factors take values, then we consider
\begin{align}\label{basic_system}
d Y_t & = (\mu_1(Y_t) + \mu_2(Y_t)) dt + \sum_{i=1}^{d} \sigma (Y_t) \bullet \lambda_i dB^i_t, \\
Y_0 & \in H,
\end{align}
where
\begin{equation}
 \sigma(Y) : H_0 \to H_0
\end{equation}
is an invertible linear map on some closed subspace $ H_0 \subset H $ containing the real span  $\langle \lambda_1,\ldots,\lambda_d \rangle $ of the set of \textbf{return directions} $ \lambda_1, \ldots, \lambda_d \in  H$ (see Remark \ref{returns} for the use of the word return here). Such volatility structures are usually called \textbf{constant direction volatilities}, see for instance \cite{slinko}. The ``volatility factor'' $ \sigma $ is chosen appropriately for the respective risk factors in order to exclude immediate arbitrages (like, e.g., negative interest rates). It should be interpreted as an \textbf{a priori given} factor governing the shape of the support of the risk factors;  it is hence a \textbf{geometric factor}. The vector field $ \mu_1 $ corresponds to appropriately chosen no-arbitrage conditions;  $ \mu_2 $ is another vector field  that lies in the span  $\langle \lambda_1,\ldots,\lambda_d \rangle $ and corresponds to an appropriate Girsanov type change of measure.

We suppose that the usual conditions on the drift $ \mu_1 $ hold, that is, it can be written as
\begin{equation}
 \mu_1(Y) = \mathcal{A} Y + \mu_3(Y) 
\end{equation}
for some $ Y \in \dom(\mathcal{A}) $, where $ \mathcal{A} $ denotes the generator of a strongly continuous semigroup which  in our case will always be a shift to the right or the identity semigroup on the Hilbert space of risk factors $ H $. The maps $ \mu_3, \mu_2 : H \to H $ are smooth and all their derivatives are bounded ($C^{\infty}$-bounded map); they represent the \textbf{no arbitrage conditions}. Certainly -- in view of existence and uniqueness -- we have to assume global Lipschitz properties for the respective vector fields.

\begin{example}
\normalfont
Our driving example will be a joint interest rate (IR) and foreign exchange (FX) market for $ p+1 $ countries with different currencies and interest rate levels ($p \geq 1$). The risk factors of this market are the different log-exchange rates and forward interest rates. For the basic equations of this market we refer to \cite{slinko}. Denote by $ P^{0}_t(x) $ and $ P^{\alpha}_t(x) $, $ \alpha \geq 1 $ the time $t$ prices of the domestic and foreign default-free, zero coupon bonds, where $ \alpha=1,\ldots,p $ denotes the different economies where exchange rates and zero coupon bonds are observed. The bonds are maturing at $ T = t + x \geq 0 $. We define the domestic and foreign instantaneous forward rates as
\begin{equation}
r^{0}_t = \partial \log P^{0}_t, \quad r^{\alpha}_t = \partial \log P^{\alpha}_t.
\end{equation}
Using the Musiela parametrization and the standard Heath-Jarrow-Morton drift condition we can write down the domestic forward rate dynamics under the domestic martingale measure
\begin{equation}
dr^0_t  =  \bigr( \frac{d}{dx} r^0_t  + \sum_{i=1}^d    \sigma (r^0_t) \bullet \lambda_i \, \int \sigma (r^0_t) \bullet \lambda_i \bigl )dt + \sum_{i=1}^{d} \sigma (r^0_t) \bullet \lambda_i \, dB^i_t,
\end{equation}
Analogously under the foreign martingale measure we obtain a Heath-Jarrow-Morton equation for the foreign forward rate dynamics.
If the dynamics of the spot log-exchange rate process with respect to the domestic martingale measure is as follows
\begin{equation}\label{fx_equation}
d S^{\alpha}_t = (r^{0}_t (0) - r^{\alpha}_t (0))dt - (\sum_{i=1}^d \frac{{(\delta^{\alpha}_i)}^2}{2})dt + \sum_{i=1}^d \delta^{\alpha}_i dB^i_t,
\end{equation}
we obtain under the domestic martingale measure the foreign forward rate dynamics as
\begin{align}\label{ir_equation}
dr^{\alpha}_t & =  ( \frac{d}{dx} r^{\alpha}_t  +  \sum_{i=1}^d  \sigma (r^{\alpha}_t) \bullet \lambda_i \int \sigma (r^{\alpha}_t) \bullet \lambda_i  - \\ \nonumber
& - \sum_{i=1}^d    (\sigma (r^{\alpha}_t) \bullet \lambda_i) \delta^{\alpha}_i )dt +  \sum_{i=1}^{d} \sigma (r^{\alpha}_t) \bullet \lambda_i \, dB^i_t,
\end{align}
for $ \alpha = 1,\ldots, p $. Recall the no-arbitrage conditions, namely, the processes
$$
\exp(-\int_0^t r^{0}_t (0) dt) P^{0}_t(T), \quad \exp(-\int_0^t r^{0}_t (0) dt) \exp(S^{\alpha}) P^{\alpha}_t(T)
$$
are (local) martingales on $ [0,T] $ for $ \alpha =1,\ldots,p $ and $ T \geq 0 $.

This system of stochastic partial differential equations (SPDEs) is of the form \eqref{basic_system}, when we consider a vector of risk factors $ Y = (r^0,r^1,\ldots,r^p,S^1,\ldots,S^p) $ forming an element of a large Hilbert space
\begin{equation}
H := H^0\times \cdots \times H^{p} \times \mathbb{R} \times \cdots \times \mathbb{R}.
\end{equation}
By abuse of notation we use the same letters $ \lambda_1,\ldots,\lambda_d $ and $ \sigma $ for the respective fields on the Hilbert space of risk factors, such that we obtain an equation of type \eqref{basic_system}. Notice that we can easily perform a change of measure in order to attain the physical measure. This corresponds then to introducing an appropriately modified drift, denoted by the additional term $ \mu_2 (Y) $.
\end{example}

\section{Calibrating the diffusive SPDEs to the market}\label{calibration}

In this section we propose a calibration to (diffusive parts of) the historical risk factor time series, which does not need neither optimization procedures nor the numerical evaluation of the model in question. Instead of using those standard approaches, which pose serious problems in high dimensions, we will rather consider a ``layman's calibration'' of the model, where the characteristics of the SPDE are deterministically  and directly calculated from the observed time series. This corresponds to the ``historical approach'', which is standard for several branches of scenario generation, however, we only apply the historical approach at an infinitesimal level (thinking of a time-tick, usually from hours up to days in scenario generation, as the infinitesimal element): the global dynamics is then calculated by means of a stochastic differential equation whose local features are determined by the time series.

We suppose that we are in  the setting of Section \ref{basic_setting} and we assume a time series, i.e. a sample of equation \eqref{basic_system}, on equidistant grid points of distance $ \Delta $, denoted by $ Y_1,\ldots,Y_K $. Our goal consists of estimating the volatility directions $ \lambda_1,\ldots, \lambda_d $ out of the observations $ Y_1,\ldots,Y_K $ in a simple way (we regard $ Y_K $ as the last observation, which appeared one time tick $ \Delta $ ago). We do really aim for an estimation in the proper sense, i.e., we would like to have a limit theorem for SPDEs that relates the number of observations $ K $, the time-distance $ \Delta $ of two contiguous observations, and the number of volatility directions $ d $. We certainly do not want the dimension of the Hilbert space of risk factors to enter into the construction, since it might be infinite. 

For later use and for the sake of simplicity we state the stochastic partial differential equation of type \eqref{basic_system} calibrated to the given sample $ Y_1,\ldots,Y_K $
\begin{align}\label{calibrated_equation}
dX^{(K)}_t = & (\mu^{(K)}_1(X^{(K)}_t) + \mu^{(K)}_2(X^{(K)}_t) )dt + \\ \nonumber
&  + \frac{1}{\sqrt{\Delta(K-1)}} \sum_{i=1}^{K-1} {\sigma(X^{(K)}_t)} \bullet ({\sigma(Y_i)}^{-1}{(Y_{i+1}-Y_i)}) \, dW^i_t, 
\end{align}
where $ \sigma $ is a  known, non-vanishing geometric function on the risk factors describing the local dynamics. We specify in the sequel the necessary conditions for an appropriate limit theorem. We denote by $ X_t ^{(K)}$ the dependence of the solution process $ X_t $ on the number $ K $ of observations (which  also determines the distance $\Delta$ between two observations). The calibrated equation is defined on an independent stochastic basis with a $ K-1 $ dimensional Brownian motion $ (W^1,\ldots,W^{K-1}) $.

Notice here that $ \mu_1^{(K)} $ is specified by no-arbitrage conditions from the volatility structure, as given in the equations \eqref{fx_equation} and \eqref{ir_equation}. Furthermore we have to assume the following technical assumption, which allows to construct strong solutions and not only mild (respectively weak) ones. For all necessary details see \cite{filipoteich} and \cite{bautei}.

\begin{assumption}\label{ass_lambda}
 The constant volatility directions $ \lambda_1,\ldots,\lambda_d $ are elements of $ \dom(\mathcal{A}^{\infty}) $, which is the domain where all powers of $ \mathcal{A} $ are defined (see for instance \cite{bautei}. Furthermore $ \sigma: \dom(\mathcal{A}) \to \dom(\mathcal{A}) $ is assumed to be a Lipschitz map.
\end{assumption}

\begin{theorem}
Consider equation \eqref{basic_system} where $ \sigma $ is a given map and $ Y_0 \in \dom(\mathcal{A}) $ a given initial value, but the directions $ \lambda_1,\ldots,\lambda_d $ are unknown. We assume Assumption \ref{ass_lambda}. We collect a time series of observations $ Y_1,\ldots,Y_K $ on an equidistant time grid of width $ \Delta $ that cover an interval of length $ T=K \Delta $. Refining the observations by making  $ \Delta = \frac{T}{K} $ smaller and smaller leads to the following convergence statement:
\begin{equation}
\label{main convergence result}
\lim_{K\to \infty} X^{(K)}_t = Y_t 
\end{equation}
in distribution for any $ t \geq 0 $ if $ X_0 = Y_0 $. The underlying limit result is the following Gaussian one,
\begin{align}
& \lim_{K\to \infty} \int_0^t {\sigma(X^{(K)}_t)}^{-1} dX^{(K)}_t - \int_0^t {\sigma(X^{(K)}_s)}^{-1}(\mu_1^{(K)}(X^{(K)}_s) + \mu_2^{(K)}(X^{(K)}_s) ds \label{gaussian_simulated_equation} \\ 
= &\lim_{K\to \infty} \frac{1}{\sqrt{\Delta(K-1)}} \sum_{i=1}^{K-1} ({\sigma(Y_i)}^{-1}{(Y_{i+1}-Y_i)}) W^i_{t} \label{gaussian_simulated_equation_explicit_form} \\ 
= & \int_0^t {\sigma(Y_t)}^{-1} dY_t - \int_0^t {\sigma(Y_s)}^{-1}(\mu_1(Y_s) + \mu_2(Y_s)) ds. \label{gaussian_original_equation}
\end{align}
\end{theorem}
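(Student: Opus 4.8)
The plan is to derive the solution-level statement \eqref{main convergence result} from the Gaussian limit \eqref{gaussian_simulated_equation}--\eqref{gaussian_original_equation}, so I first reduce that chain of equalities to a single statement about realized covariances. Two of the three links are purely algebraic. Applying $\sigma(X^{(K)}_t)^{-1}$ to the calibrated equation \eqref{calibrated_equation} and using $\sigma(X)^{-1}\sigma(X)=\mathrm{id}$ cancels the drift against the subtracted integral and turns the stochastic integral against $(W^i)$ into the explicit finite sum, because the vectors $v_i:=\sigma(Y_i)^{-1}(Y_{i+1}-Y_i)\in H_0$ are deterministic functions of the observed data and hence constant with respect to the independent basis carrying $(W^i)$; this gives \eqref{gaussian_simulated_equation}$=$\eqref{gaussian_simulated_equation_explicit_form}. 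The same manipulation applied to the original equation \eqref{basic_system} yields \eqref{gaussian_original_equation}$=\sum_{i=1}^d\lambda_i B^i_t$. Everything therefore reduces to showing that the explicit sum \eqref{gaussian_simulated_equation_explicit_form} converges in distribution to $\sum_{i=1}^d\lambda_i B^i_t$.

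\textbf{The realized-covariance limit.} The heart of the argument is a law of large numbers for the operator $C^{(K)}:=\tfrac{1}{\Delta(K-1)}\sum_{i=1}^{K-1} v_i\otimes v_i$. Inserting the original equation \eqref{basic_system} into $v_i$, the drift contributes only a term of order $\Delta$, while the martingale part gives $v_i=\sum_{j=1}^d\lambda_j\,(B^j_{(i+1)\Delta}-B^j_{i\Delta})+O(\Delta)$; here Assumption \ref{ass_lambda} (the $\dom(\mathcal A^\infty)$-regularity of the $\lambda_j$ and the Lipschitz property of $\sigma$) is what makes these expansions legitimate in $H_0$ and controls the errors. Using the quadratic covariation $\sum_i (B^j_{(i+1)\Delta}-B^j_{i\Delta})(B^k_{(i+1)\Delta}-B^k_{i\Delta})\to\delta_{jk}T$ as $\Delta=T/K\to0$, together with $\Delta(K-1)\to T$, yields $C^{(K)}\to\sum_{j=1}^d\lambda_j\otimes\lambda_j$ in probability. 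This is precisely the law-of-large-numbers statement advertised in the introduction; the cross terms between distinct $\lambda_j,\lambda_k$ vanish because the driving Brownian motions are independent.

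\textbf{From covariances to laws.} To pass from convergence of covariances to convergence in distribution I condition on the data. Given $Y_1,\dots,Y_K$, the sum \eqref{gaussian_simulated_equation_explicit_form} is a centred Gaussian vector in $H_0$ with covariance operator $t\,C^{(K)}$, since the $W^i_t$ are independent and $N(0,t)$. Its conditional characteristic functional at $\phi\in H_0$ equals $\exp\!\bigl(-\tfrac t2\langle\phi,C^{(K)}\phi\rangle\bigr)$; taking expectations over the data and using $C^{(K)}\to\sum_j\lambda_j\otimes\lambda_j$ in probability (with uniform integrability furnished by the same moment bounds that underlie the quadratic-variation limit) produces the limit $\exp\!\bigl(-\tfrac t2\sum_j\langle\phi,\lambda_j\rangle^2\bigr)$, which is exactly the characteristic functional of $\sum_j\lambda_j B^j_t$. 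This establishes \eqref{gaussian_simulated_equation_explicit_form}$\to$\eqref{gaussian_original_equation} and hence the full Gaussian limit.

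\textbf{Lifting to the processes, and the main obstacle.} It remains to obtain \eqref{main convergence result} for the processes themselves. The diffusion characteristic of the calibrated equation is $C^{(K)}$, and its drift $\mu_1^{(K)}+\mu_2^{(K)}$ is built from $C^{(K)}$ through the same no-arbitrage map that produces $\mu_1+\mu_2$ from $\sum_j\lambda_j\otimes\lambda_j$ (cf.\ \eqref{fx_equation}, \eqref{ir_equation}); continuity of that map together with $C^{(K)}\to\sum_j\lambda_j\otimes\lambda_j$ gives convergence of both characteristics. Under Assumption \ref{ass_lambda} the equation admits strong solutions that depend continuously on their characteristics, and a stability estimate for such constant-direction SPDEs then upgrades characteristic convergence to $X^{(K)}_t\to Y_t$ in distribution for each fixed $t$. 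I expect the two genuinely delicate points to be the uniform control of the $O(\Delta)$ errors in the infinite-dimensional realized-covariance limit -- where the nonlinearity of $\sigma$ and the unboundedness of $\mathcal A$ interact, so that Assumption \ref{ass_lambda} must be exploited carefully -- and the final stability step, which relies on the well-posedness and continuous-dependence theory for these equations rather than on any soft argument.
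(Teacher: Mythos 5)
Your proposal is correct and follows essentially the same route as the paper: identify the explicit sum as a centred Gaussian with covariance $t\,C^{(K)}$, show $C^{(K)}\to\sum_j\lambda_j\otimes\lambda_j$ via the quadratic-variation approximation of the It\^o process $\sigma(Y_t)^{-1}dY_t$, and then lift to convergence of the solution processes by a stability theorem for the SPDEs (the paper makes this last step concrete by diagonalizing $B^{(K)}$ into independent Brownian motions along converging eigendirections and invoking Proposition 9.1 of the Filipovi\'c--Tappe--Teichmann stability result, which is the "stability estimate" you correctly anticipate needing).
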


\begin{proof}
Let $ H $ be the risk factor Hilbert space and $ \lambda \in H $. Denote by
$
\lambda \otimes \lambda
$  the bilinear operator defined by $ ( \lambda \otimes \lambda )(v,w) := \langle \lambda , v \rangle \langle \lambda, w \rangle $, for $ (v,w) \in H \times H $. Linear combinations of these (one-dimensional) bilinear operators are dense in the set of Hilbert-Schmidt operators (with respect to the Hilbert-Schmidt norm). Let $ M $ be a Hilbert space valued It\^o process given by
$$
dM_t = a_t dt + \sum_{i=1}^d b_t^i dW^i_t,
$$
for a family $ a,b^1,\ldots,b^d $ of square-integrable processes; then, the quadratic covariation
$$
\sum_{i=1}^d \int_0^T (b_t^i \otimes b_t^i) \, dt
$$
is approximated with respect to the Hilbert-Schmidt norm by
$$
\sum_{j=0}^{K-1} \left(M_{\frac{T(j+1)}{K}} - M_{\frac{Tj}{K}}\right) \otimes \left(M_{\frac{T(j+1)}{K}} - M_{\frac{Tj}{K}}\right),
$$
see for instance \cite{dapzab:92}. The Gaussian random variable in equation~(\ref{gaussian_simulated_equation_explicit_form}) has a covariance matrix
\begin{equation}
\frac{t}{\Delta(K-1)} \sum_{i=1}^{K-1} ({\sigma(Y_i)}^{-1}{(Y_{i+1}-Y_i)}) \otimes {({\sigma(Y_i)}^{-1}{(Y_{i+1}-Y_i)})}
\end{equation}
which precisely corresponds, up to the factor $ t/\Delta$, with the standard statistical estimator  the returns (see Remark \ref{returns} for the use of the word return here) of the sample $ Y_1,\ldots,Y_K $ multiplied by $ {\sigma(Y_i)}^{-1} $. On the other hand, this estimator converges, as $ K $ tends to infinity, to 
\begin{equation}
\lim_{K \to \infty} \frac{1}{\Delta(K-1)} \sum_{i=1}^{K-1} ({\sigma(Y_i)}^{-1}{(Y_{i+1}-Y_i)}) \otimes {({\sigma(Y_i)}^{-1}{(Y_{i+1}-Y_i)})} = \sum_{i=1}^d \lambda_i \otimes \lambda_i.
\end{equation}
This is due to the fact that the quadratic covariation of the stochastic process
\begin{equation}
{\sigma(Y_t)}^{-1}dY_t = {\sigma(Y_t)}^{-1}(\mu^{(K)}_1(Y_t) + \mu_2(Y_t)) dt + \sum_{i=1}^d \lambda_i d B^i_t 
\end{equation}
is given by $ \sum_{i=1}^d (\lambda_i \otimes \lambda_i) \, dt $, whence $ \frac{1}{T} \sum_{i=1}^d (\lambda_i \otimes \lambda_i) $ is approximated with respect to the Hilbert-Schmidt norm by
$$
\frac{1}{\Delta(K-1)} \sum_{i=1}^{K-1} ({\sigma(Y_i)}^{-1}{(Y_{i+1}-Y_i)}) \otimes {({\sigma(Y_i)}^{-1}{(Y_{i+1}-Y_i)})}.
$$
Notice here that $ \Delta K = T $. This proves the equality of \eqref{gaussian_simulated_equation} and \eqref{gaussian_original_equation}. 

In order to prove the convergence result~(\ref{main convergence result}) we will have to apply the following limit theorem in distribution for a sequence of stochastic partial differential equations: let $ X^{(K)}_t$ and $ Y_t $ be the unique solutions of the equations
\begin{align}
dX^{(K)}_t & = (\mu^{(K)}_1(X^{(K)}_t) + \mu_2^{(K)}(X^{(K)}_t) )dt + {\sigma(X^{(K)}_t)} \bullet d B^{(K)}_t, \; X^{(K)}_0 = X_0,\\
dY_t & = (\mu_1(Y_t) + \mu_2(Y_t) )dt + {\sigma(Y_t)} \bullet d B_t, \; Y_0 = X_0,
\end{align}
where
\begin{align}
B^{(K)}_t & = \sum_{i=1}^{K-1} {\sigma(Y_i)}^{-1}{(Y_{i+1}-Y_i)} \, W^i_t, \\
B_t & = \sum_{i=1}^d \lambda_i \, B^i_t
\end{align}
for $ t \geq 0 $. Then 
\begin{equation}
\label{limit theorem we need}
 X^{(K)}_t \longrightarrow  Y_t  
\end{equation}
in distribution as $ K \to \infty $. In order to show~(\ref{limit theorem we need}), we start by noticing that, due to the previous consideration we have
$$
B^{(K)} \xrightarrow[K \to \infty]{} B
$$
in distribution on the Hilbert space $ H $. The limit theorem~(\ref{limit theorem we need}) can be concluded from the stability theorems in \cite{filtaptei} and the uniqueness in law established in \cite{dapzab:92}: first, by uniqueness in law, the actual Brownian motion in use is not relevant. Hence, we can interpret the convergence of $ B^{(K)} $ as $ K \to \infty $ with respect to independent Brownian motions $G^l_t $, each with a certain volatility $ \kappa^{(K)}_l$ and associated to a certain eigendirection $e_l^{(K)} $, that is, we can rewrite in law
$$
B_t^{(K)} = \sum_{i=1}^\infty \kappa^{(K)}_l e_l^{(K)} G^l_t.
$$
This yields, for each $ l $, vector fields converging to a limit as $ K \to \infty $, namely the eigensystem of $ \sum_{i=1}^d \lambda_i \otimes \lambda_i $, if one carefully enumerates the volatilities $ \kappa_l^{(K)} $ in increasing order and keeps track of multiple eigenvalues. Second, if we reformulate the stochastic partial differential equations \eqref{basic_system} and \eqref{calibrated_equation} with respect to the sequence of independent Brownian motions $ G^l $, we find ourselves with a sequence of vector fields converging to well-specified limits. Finally, the required statement follows from the stability result in Proposition 9.1 of \cite{filtaptei}.
\end{proof}

\begin{remark}
\normalfont
The methodology in the previous theorem is related to the one in \cite{malmanrec} in the sense that we read from the quadratic variation process and its approximations the relevant information for the a posteriori construction of a process which is close in law to the originally observed one. Indeed, in \cite{malmanrec} the quadratic variation process and its approximations are -- via harmonic analysis methods -- investigated in order to reveal information on the non-linear volatility vector fields. In contrast, we do assume knowledge on the type of non-linearity in the equation and tackle the problem of determining the volatility directions $ \lambda_1, \ldots, \lambda_d $. The convergence results based on Wiener's theorem, as stated in \cite{malmanrec}, could be similarly applied.
\end{remark}

\begin{remark}
\normalfont
Notice that we do not need to calculate empirical covariance matrices of the time series in order to to capture the local correlation structure of \textbf{all} risk factors in the calibrated equation \eqref{calibrated_equation}.
\end{remark}

\begin{remark}
\normalfont
Up to some technical complications one could also consider $ d = \infty $ here.
\end{remark}

\section{Rare events and Fine-tuning through random time changes}\label{fine_tuning}

Besides the calibration of the risk factor SPDEs proposed in equation \eqref{calibrated_equation} we have to take into account stochastic volatility. We express stochastic volatility in our setting through the appearance of random times, which can be nicely interpreted as difference between physical time and trading time. As the spirit of this article is to embed a static sampling problem (in contrast to the distributional approach where only two random variables are described) into a dynamic problem defined by stochastic (partial) differential equations, we also embed the random time into a random time change, which can then a fortiori be observed through an stochastically changing volatility process. This approach is a continuous time version of GARCH-approaches, which are applied in filtered historical simulation in risk management (see \cite{barbougia:98}). Here, in contrast, we ``calculate'' stochastic volatility along the time series as realized volatility and rescale the time series by this realized volatility before we apply those rescaled (``filtered'') returns to calibrate the underlying stochastic differential equations as proposed in equation \eqref{calibrated_equation}

In this section we therefore introduce the additional methodology, which allows us to include these changing volatility effects  of the observed diffusion processes and which permits jumps. Both phenomena, stochastic volatility and jumps, are crucial to capture important stylized facts of financial time series. For instance, when we model rare (extreme) events we those events as additional jumps added to the diffusive setting of equation \eqref{basic_system}. Recall that these additional jumps also change the drift $ \mu_1 $ due to no arbitrage conditions. In this section we assume the existence of a basis $ \{e_j\} $ of the respective Hilbert spaces of risk factors in order to speak about the risk factors $ Y^j $, which denote the projection of $Y $ onto the $j$-th basis element.

\begin{remark}\label{returns}
\normalfont
We shall apply in the sequel the notion \emph{return} for two slightly different concepts. When considering a time series $ Y_1,\ldots,Y_K $, which is a collection of vectors, then we shall call any difference $ Y_{i+1} - Y_i $ an (observed) return (of risk factors), where we should in fact speak of a vector of returns $ {(Y_{i+1}^j -Y_i^j)} $ of single risk factors $ Y^j $. Running indices for risk factors are always denoted by $ j $, the index along the time series is usually $ i $. The length of the time series will be always be $ K $.
\end{remark}

We define rare events using known time series of the risk factors and by the risk manager's very own intuition. Let $ Y_1,\ldots,Y_K $ denote a time series of risk factors. We form a time series of returns $ Y_{i+1} -Y_i $ for $ i=1,\ldots,K-1 $ from it and estimate the empirical (co-)variance $ \widehat{\sigma} $ among its different components. Additionally we define local extractions of the time series of length $ L $, i.e. $ Y_{k+i+1} - Y_{k+i} $, for $ k = 1,\ldots,L $ and for $ i = 0,\ldots, K-L-1 $. The empirical (co-)variance of this extraction is denoted by $ \widehat{\sigma}^{(i+L)} $ and corresponds to the estimation of the volatility looking back $ L $ days, usually called (local) realized volatility.

We now consider a diffusive equation of type \eqref{basic_system} with jumps and random time change. More precisely, let $ \tau^j_t = \int_0^t A^j_s ds $ be a random time change for risk factor $ j $, i.e. $ A^j $ is a non-negative, locally integrable previsible stochastic process (with respect to the natural filtration of the Brownian and Poissonian componentes), whence  $ t \mapsto \tau^j_t $ is an almost surely increasing random time. Then the equation to be calibrated looks like
\begin{align}
d Y^j_{\tau^j_t} & = (\mu_1(Y_{\tau_t}) + \mu_2(Y_{\tau_t})) d\tau^j_t + \sum_{i=1}^{d} {( \sigma (Y_{\tau_t}) \bullet \lambda_i)}^j dB^i_{\tau^j_t} + dN_{\tau^j_t} ,\\
d \tau^j_t & = A^j_t dt, \\
Y_0 & \in H,
\end{align}
where $ N $ denotes a Poisson process with values in $ H $ and jump rate $ \mathcal{r} $ calibrated from the time series. The solution of this random SPDE can be written through the random time change by solving \eqref{basic_system} up to the time $ \tau^j_t $ for the $j$-th risk factor. It is possible to estimate the volatility levels $ A^j_{t} $ from the market up to an (a priori unknown) factor by observing a proper quadratic variation process between two jumps at time $ a $ and $ b $, namely
$$
	\langle \int {{(\sigma(Y_t)}^{-1} \bullet dY_t)}^j \rangle|_a^b 
= 
\int_a^b A^j_s ds \, \sum_{i=1}^d \int_a^b {( \lambda_i^j)}^2 ds. 
$$

As a second notation we introduce the ratios of local empirical volatility and today's local volatility, namely
$$
\widehat{R}^j_i = \frac{\sqrt{\widehat{\sigma}^{(i)}_{jj}}}{\sqrt{\widehat{\sigma}^{(K-1)}_{jj}}},
$$
which is defined for each risk factor $ j $ separately. We denote this quantity by $ \widehat{R}^j_i $ at time $ i $ in the time series. Notice that this quantity is well-defined, even under the regime of a random time change, since the (unknown) proportionality factor cancels out.
\begin{definition}\label{rescaled_returns}
Let $ Y_1,\ldots,Y_K $ denote a time series of risk factors. We form a time series of returns $ Y_{i+1} -Y_i $ for $ i=1,\ldots,K-1 $ from it and denote by
$$ 
\frac{(Y^j_{i+1} - Y^j_i)}{\widehat{R}^j_i} 
$$
for risk factor $ j $ and $ i =L,\ldots,K-1 $ the rescaled returns of the time series.
\end{definition}

\begin{remark}
\normalfont
In order to make the choice of extreme events more stable along time, one can choose $L$ for re-scaling the returns and for extracting extreme events independently. In the subsequent example this is the case.
\end{remark}

In order to calibrate the equation in the spirit of \eqref{calibrated_equation} we start from the set of rescaled returns and separate from it the set of extreme returns.

\begin{definition}
A return at index $ L \leq i \leq K-1 $ is said to be extreme at level $ \eta $ with $ M $ violations if 
$$ 
\eta \times \sqrt{\widehat{\sigma}^{(K-1)}_{jj}} \leq \frac{| Y^j_{i+1} - Y^j_i |}{\widehat{R}^j_i} \,
$$
holds true for at least $ M $ different risk factors $ j $. The set of all indices belonging to extreme events at level $ \eta $ with $ M $ violations is denoted by $ \mathcal{E}_{\eta,M} $ and is a subset of the set of all indices $ \mathcal{K} = \{1,\ldots,K-1\} $.
\end{definition}

By means of the set $ \mathcal{E}_{\eta,M} $ we can now define a jump structure, where with a certain jump rate and jump measure jumps from $ \mathcal{E}_{\eta,M} $ are mixed (in an independent fashion) into the diffusion process calibrated with returns from the set $ \mathcal{K} \setminus \mathcal{E}_{\eta,M} $. 

For the purpose of numerical efficiency of the simulation we shall most of the time neglect the fact that the random time change depends on each risk factor $Y^j$. Actually the simulation time $ \tau_{t} $ is chosen with respect to the distribution of $ \int_0^{t} A_s ds $, which is estimated from an averaged realized variance structure.

\section{Evaluating the stochastic differential equations}

In our case the parameters of the SPDE are specified by the the empirical calibration method outlined in Section \ref{calibration} and by an a priori choice of the geometric term $ \sigma $. Therefore, in order to fully characterize the equation, the following points need to be taken into account:
\begin{enumerate}
 \item Determine the sets $ \mathcal{E}_{\eta,M} $ and $ \mathcal{K} \setminus \mathcal{E}_{\eta,M} $ by fixing a level $\eta$ at which events are considered as extreme and a number $ M $ of violations.
 \item The number of involved Brownian motions is the cardinality of $\mathcal{K} \setminus \mathcal{E}_{\eta,M}$.
 \item The volatility directions are given by rescaled returns using equation \eqref{calibrated_equation}.
 \item The semigroup $\mathcal{A}$ is the shift semigroup for term structure of the risk factors, otherwise the identity.
 \item $ \sigma $ is chosen a priori according to the (geometric) structure of the problem, i.e., some risk factors have to stay positive (like interest rates), others are free (like log-prices).
\end{enumerate}

For the evaluation of the SPDE \eqref{calibrated_equation} we propose an explicit-implicit Euler scheme (see for instance \cite{filtaptei}), which has the following one-step Euler structure:
\begin{itemize}
 \item Prepare an initial risk factor $ Y_0 $.
 \item Trigger a jump (with a specified jump rate), or not. If a jump is triggered, then sample its size using the jump measure and add it to $ Y_0 $.
 \item Choose Brownian increments $  B^i_t$ with variance $ t $ along the time increment $ t $.
 \item Evaluate
\begin{align}
S_{t} Y_0 + & \mu^{(K)}_3(Y_0)t + \mu^{(K)}_2(Y_0) t + \\
+ & \frac{1}{\sqrt{\Delta(K-1)}}\sum_{i \in \mathcal{K} \setminus \mathcal{E}_{\eta,M}} \sigma(Y_0) \bullet {\sigma(Y_i)}^{-1}\frac{(Y_{i+1}-Y_i)}{\hat{R}_i^.} \, B^i_t.
\end{align}
 \item Terminate the Euler step.
\end{itemize}

\begin{remark}
\normalfont
In our concrete implementations the choice of $ \sigma $ depends on the respective risk factors. For instance, for interest rates we have chosen fields of linear operators depending via square-root like functions on the level of yield at certain maturities.
\end{remark}

\section{Results from a concrete implementation}

In this last section we present results from a concrete implementation of a scenario generator for five foreign currencies (CHF, GBP, JPY, USD, ZAR) and six yield curves, i.e., five foreign yield curves and one domestic yield curve (Euro zone). For this purpose we use market data from March 2006 to March 2009, including in particular the highly volatile events of the last quarter of 2008. In total, this risk management problem consists of $ 82 $ risk factors, even though more points on the yield curve could have been evaluated. In a given day, the available historical time series has a length of $ K = 500 $ business days back in time, i.e., in March 2009 we look back up to March 2007, for instance. The implementation is based on Section \ref{basic_setting} and the systems of equations presented therein. In the light of Sections \ref{calibration} and Section \ref{fine_tuning} the realized variances (with $L=20$) and the set of extreme events (for the calculation of extreme events we used $L=40$) are calculated: the number of violations $ M = 4 $ and the level $ \eta = 4 $ yield a number of extreme events of $32$ (for $M=1$, i.e.~an extreme event occurs if we have one violation for one risk factor, we obtain $162$ extreme events). The jump rate is chosen to be $ 2 $ percent (yielding about $ 100 $ jump events in $ 5000 $ scenarios). The random time change has been chosen ad-hoc to be $ 0.9 $ days in $ 90 $ percent of the cases and $ 1.9 $ days in $ 10 $ percent of the cases for the sake of simplicity. A precise calibration might yield even better results.

The following pictures show several single distinguished historical and generated distributions of risk factors. The historical distributions are rescaled by $ R $ as described in Definition \ref{rescaled_returns}. The last figure shows the backtesting results for a randomly chosen roll-over portfolio of linear FX- and IR-instruments, i.e., the times to maturity of instruments in the portfolio are kept constant along the backtesting. The distributional plots show on the left hand side the historical distribution of log-returns for one day and on the right hand side the simulated distributions of log-returns for one day. 

It is apparent that our simulated scenarios resemble the historical distributions by smoothening them in a consistent way along all risk factors and by preserving extreme risks. We obtain backtesting results of four VAR-violations and two portfolio-loss-values beyond the level of expected shortfall in 250 days at a $99\% $ confidence level.

In the first set, the smoothening  is the main visible effect  in the simulation results. Also other stylized facts of the distribution like skewness and kurtosis are preserved.
\begin{center}
 \includegraphics[width=10 cm,height=7 cm]{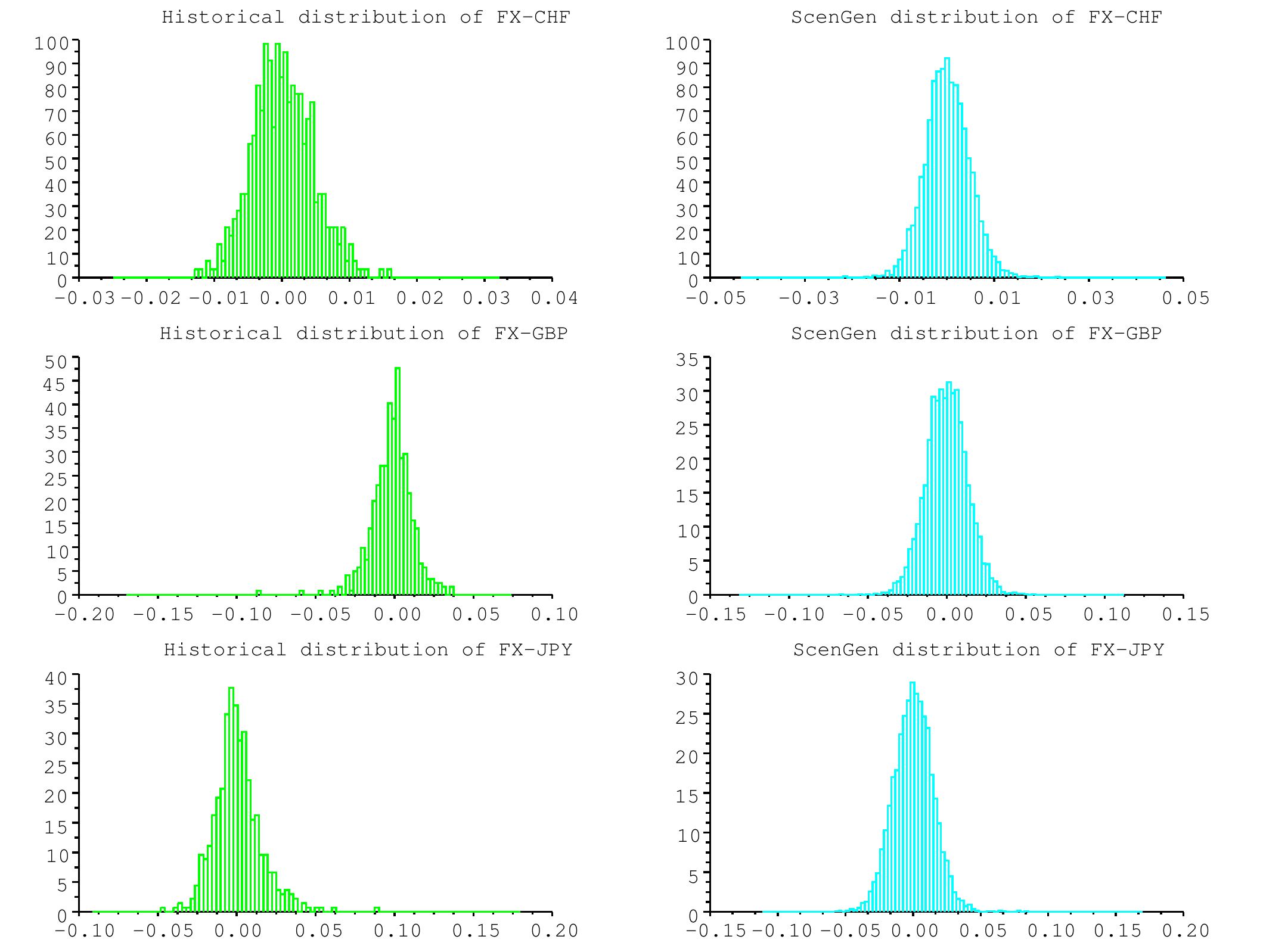}
 % FX-CHF_FX-GBP_FX-JPY.pdf: 567x401 pixel, 72dpi, 20.00x14.15 cm, bb=0 0 567 401
\end{center}
\begin{center}
 \includegraphics[width=10 cm,height=7 cm]{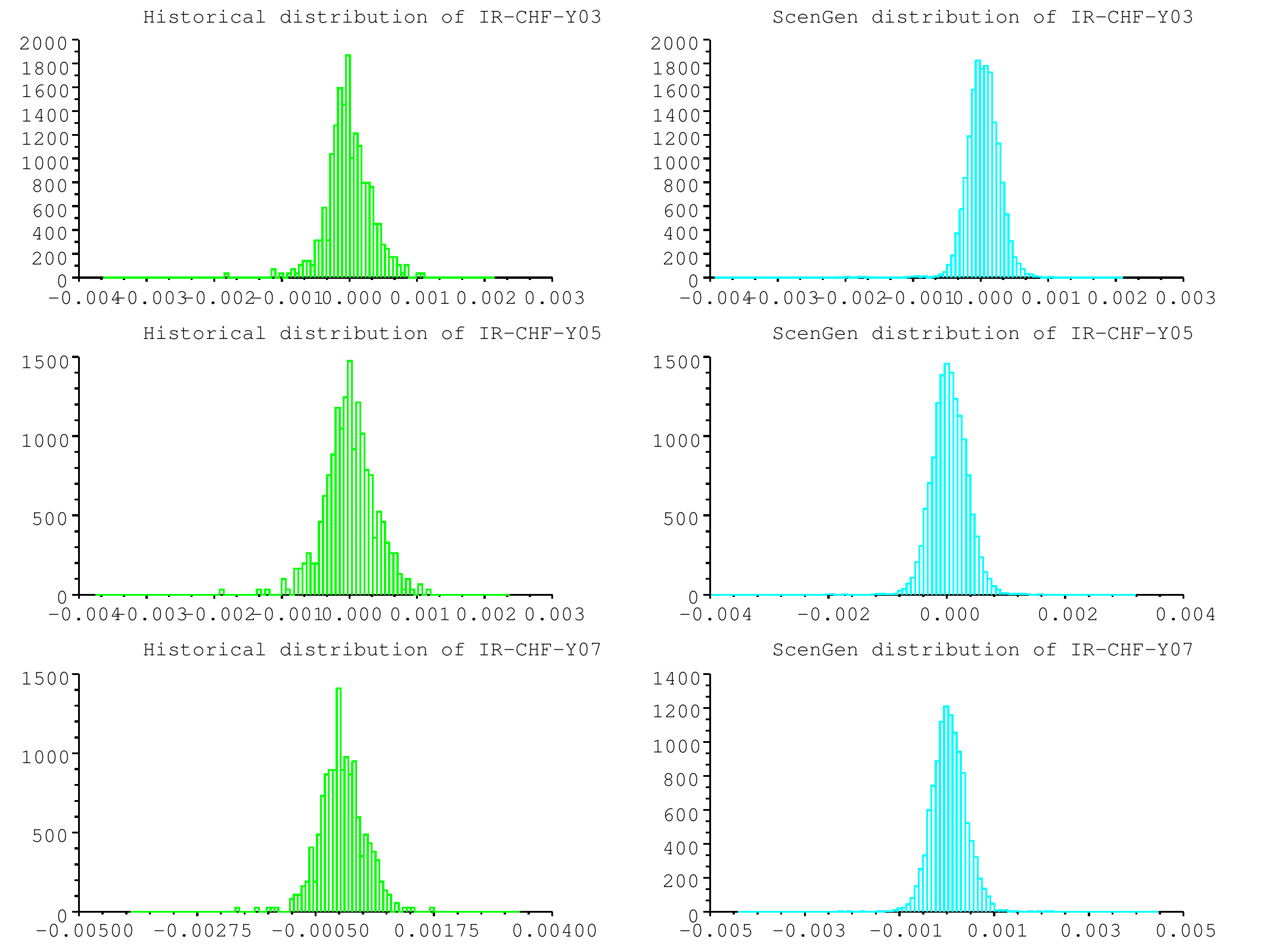}
 % IR-CHF-Y03_IR-CHF-Y05_IR-CHF-Y07.pdf: 567x401 pixel, 72dpi, 20.00x14.15 cm, bb=0 0 567 401
\end{center}
In the second set of plots, mass of the historical distributions is transported from the center to middle-size events.
\begin{center}
 \includegraphics[width=10 cm,height=7 cm]{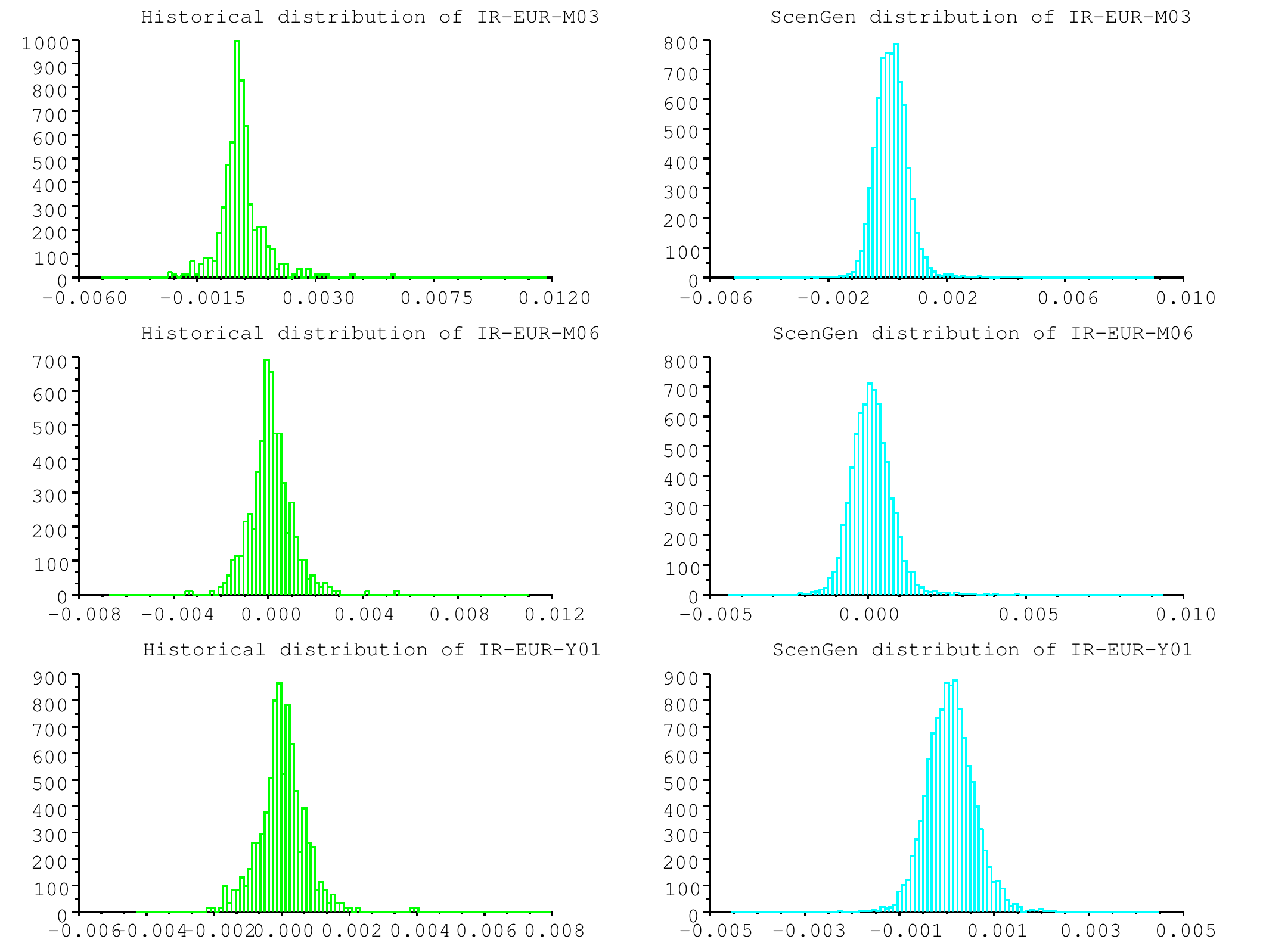}
 % IR-EUR-M03_IR-EUR-M06_IR-EUR-Y01.pdf: 567x401 pixel, 72dpi, 20.00x14.15 cm, bb=0 0 567 401
\end{center}
In the third set of plots the mass transport effect is seen in an even more extreme way, since for ZAR a lot of returns have historically been zero. In this sense the constructed scenario generator also has a ``repairing'' effect on data.
\begin{center}
 \includegraphics[width=10 cm,height=7 cm]{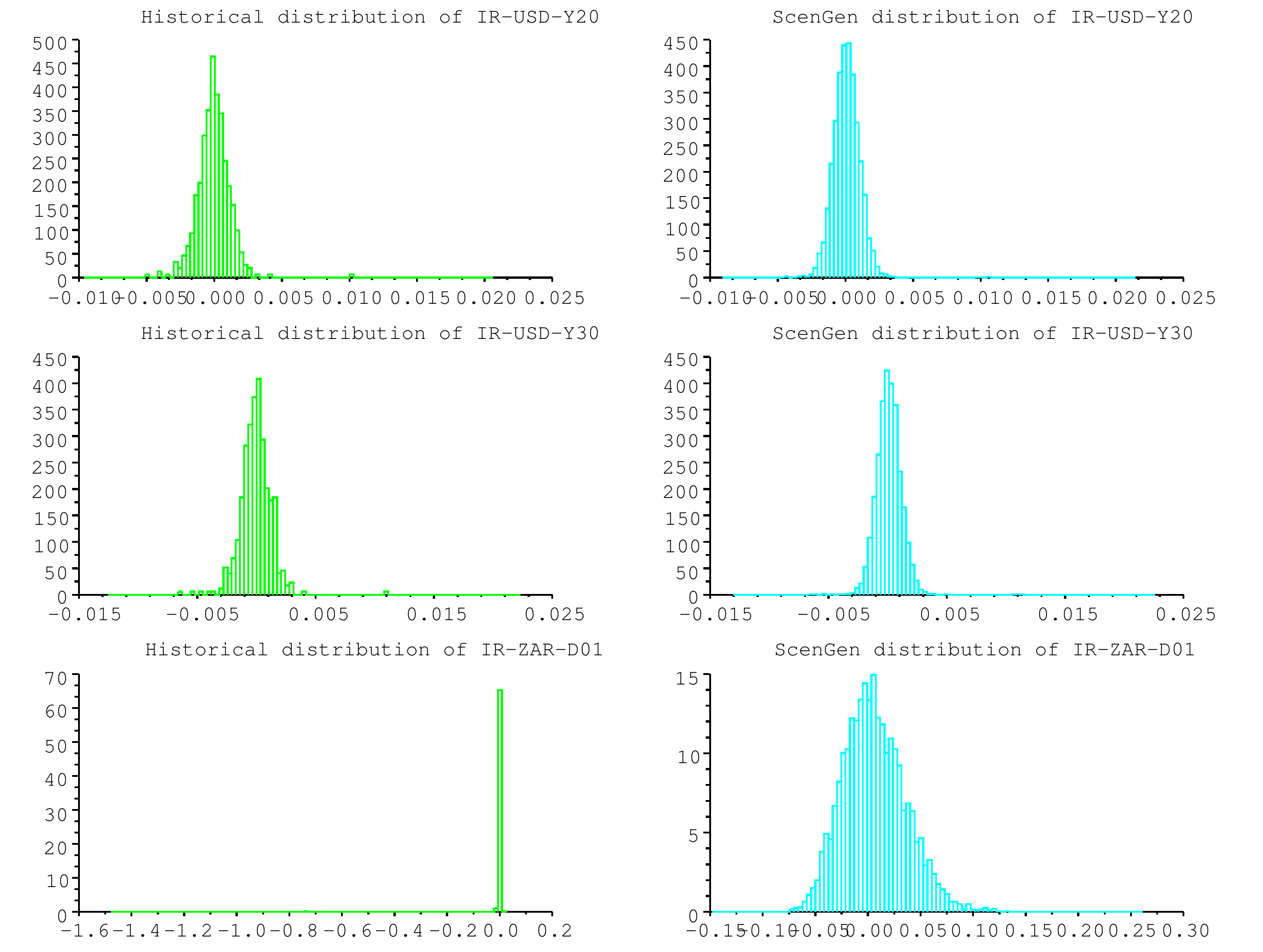}
 % IR-ZAR-D07_IR-ZAR-M01_IR-ZAR-M03.pdf: 567x401 pixel, 72dpi, 20.00x14.15 cm, bb=0 0 567 401
\end{center}
\bigskip
Finally, we present backtesting results of  a two roll-over portfolios during $250$ business days at the $99\% $ confidence level. Returns are in red, VaR in blue and the value of expected shortfall is in green. One can clearly see the increase of the level of VAR (and expected shortfall) during the events of the last quarter 2008. We have omitted the scaling on the y-axis as violations are only a level of relative size.
\begin{center}
\label{fig_1}
 \includegraphics[width=12 cm,height=11 cm]{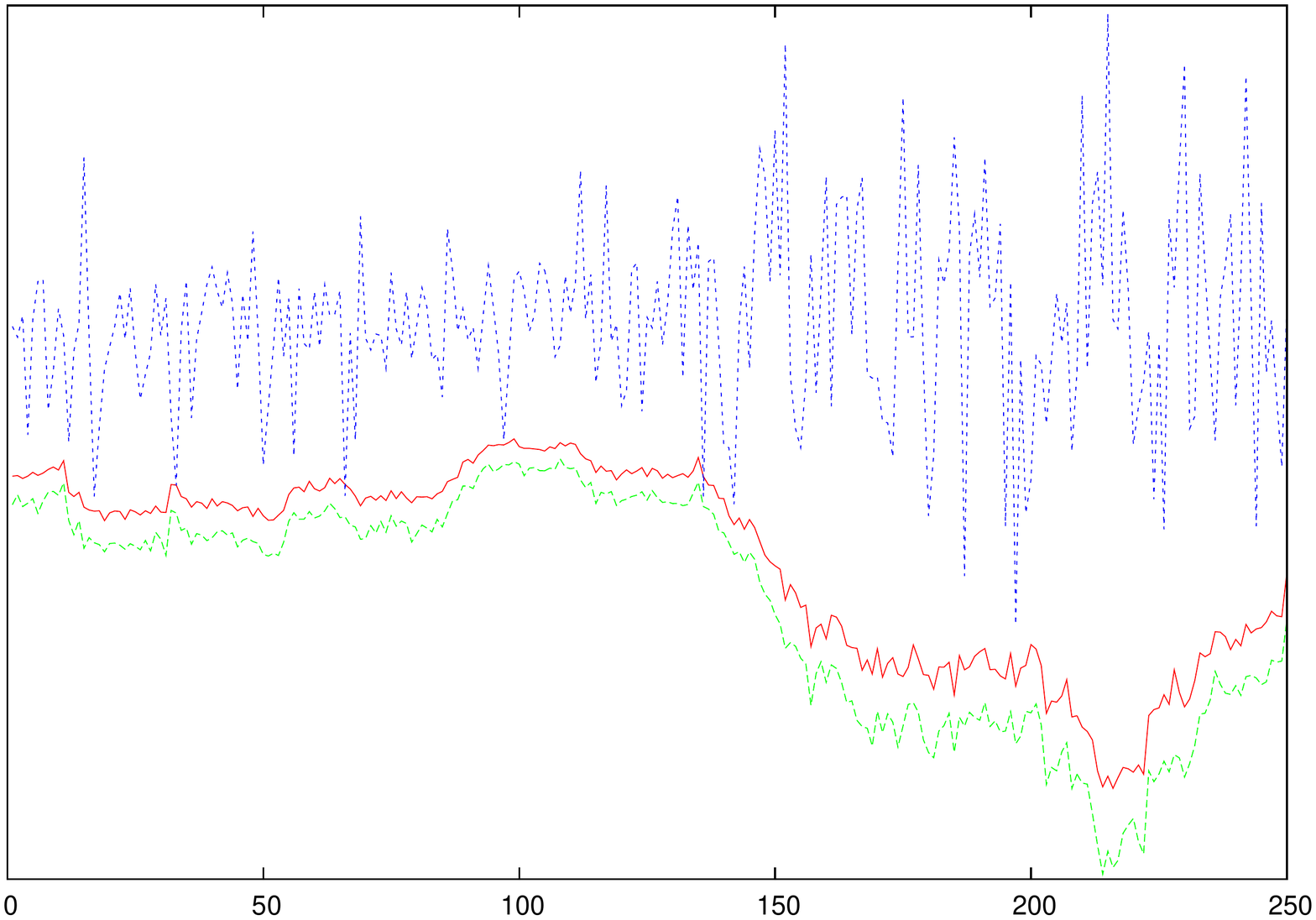}
 % backtesting.pdf: 612x792 pixel, 72dpi, 21.59x27.94 cm, bb=0 0 612 792
 %\caption{}
\end{center}

\begin{center}
\label{fig_2}
 \includegraphics[width=12 cm,height=11 cm]{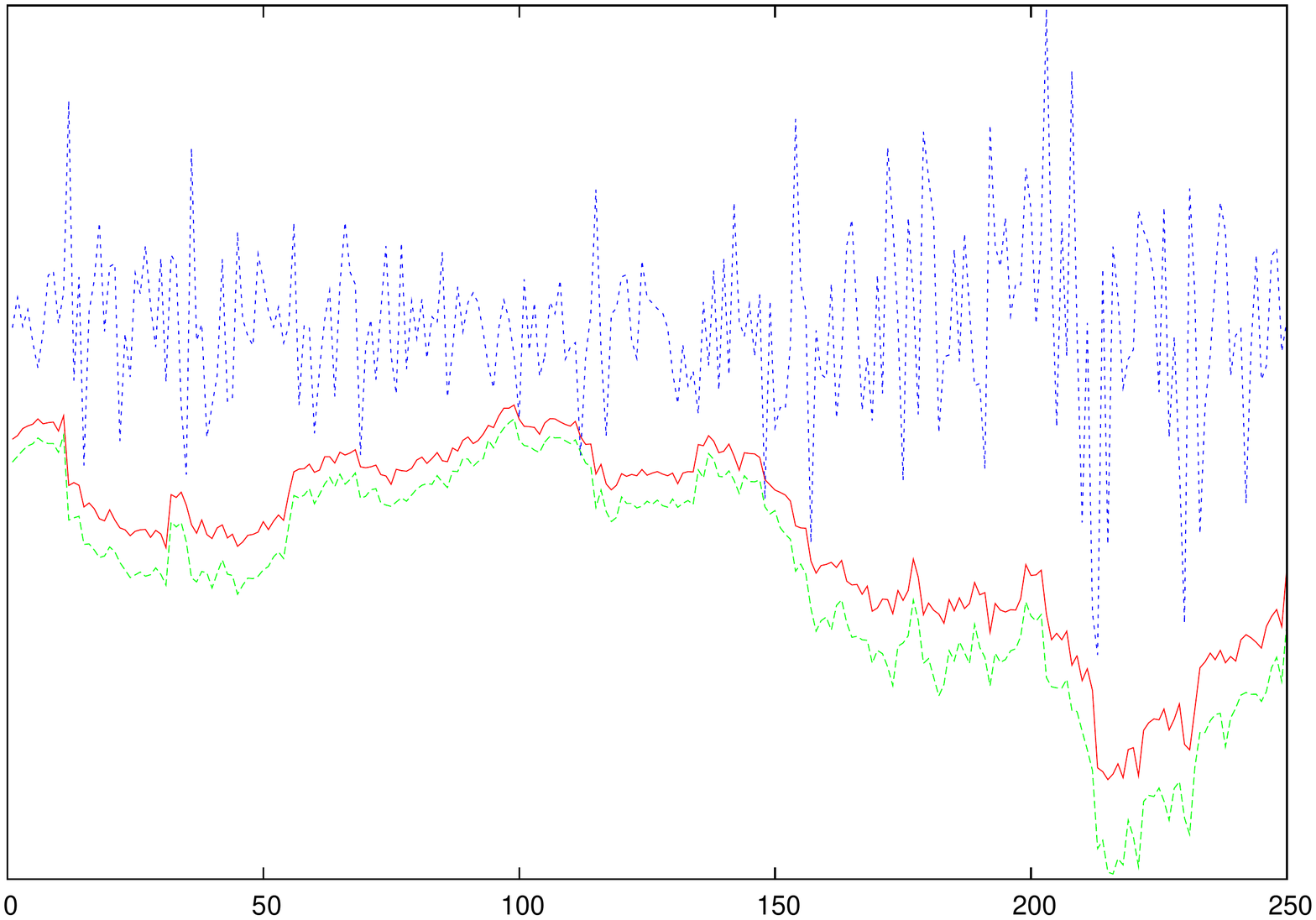}
 % backtesting.pdf: 612x792 pixel, 72dpi, 21.59x27.94 cm, bb=0 0 612 792
 %\caption{}
\end{center}

%\bibliography{references_scen_gen}

\end{document}